\let\tss\ddagger
\newcommand{\fig}[1]{\includegraphics[scale=0.66]{#1}}
\newcommand{\figSm}[1]{\includegraphics[scale=0.59]{#1}}
\newcommand{\atom}[1]{\ensuremath{\mathsf{#1}}}
\newcommand{\motif}[1]{Motif~\textsf{#1}}
\newcommand{\bond}{\,--\,}
\title{Reversible Bond Logic}
\author{Hannah Earley}{Department of Applied Mathematics and Theoretical Physics, University of Cambridge, United Kingdom}{hannah.earley@damtp.cam.ac.uk}{https://orcid.org/0000-0002-6628-2130}{}
\authorrunning{H. Earley}
\keywords{Molecular Programming, Reversible Computing, Structural Manipulation}
\begin{document}
\maketitle

\begin{abstract}
  The field of molecular programming allows for the programming of the structure and behavior of matter at the molecular level, even to the point of encoding arbitrary computation.
  However, current approaches tend to be wasteful in terms of monomers, gate complexes, and free energy.
  In response, we present a novel abstract model of molecular programming, Reversible Bond Logic (RBL), which exploits the concepts of reversibility and reversible computing to help address these issues.
  RBL systems permit very general manipulations of arbitrarily complex `molecular' structures, and possess properties such as component reuse, modularity, compositionality.
  We will demonstrate the implementation of a common free-energy currency that can be shared across systems, initially using it to power a biased walker.
  Then we will introduce some basic motifs for the manipulation of structures, which will be used to implement such computational primitives as conditional branching, looping, and subroutines.
  Example programs will include logical negation, and addition and squaring of arbitrarily large numbers.
  As a consequence of reversibility, we will also obtain the inverse programs (subtraction and square-rooting) for free.
  Due to modularity, multiple instances of these computations can occur in parallel without cross-talk.
  Future work aims to further characterize RBL, and develop variants that may be amenable to experimental implementation.
\end{abstract}
\section{Introduction}

Molecular programming is the powerful idea that we can program the very structure and behavior of matter at the molecular level.
Not only can we build nanostructures of nearly arbitrary shape and functional properties, but we can design molecules whose interactions encode computation.
Perhaps the earliest known example was able to compute solutions to NP-complete problems, in particular the Hamiltonian path problem~\cite{adleman1994molecular}.
Since then, implementations of models such as the Tile Assembly Model (TAM) and finite Chemical Reaction Networks (CRNs) have arisen~\cite{winfree1998algorithmic,soloveichik2010dna}, leading to the possibility of general Turing-universal computation~\cite{soloveichik2008computation}.

Versatile as these schemes are, for the most part our approaches to molecular computation are wasteful.
For one, they tend to be one-shot~\cite{adleman1994molecular,qian2011efficient,zhang2011dynamic}.
Computations are set up in a state far from equilibrium, with free energy dispersed throughout the various species (typically complexes of DNA).
Computation then typically corresponds to the process of equilibration of this system~\cite{simmel2019principles}.
While this certainly provides a high driving force for the experiment, it has a number of drawbacks.
In the case of many DNA-strand displacement systems, which typically consist of signal strands and gate complexes, many of the components cannot be reused; rather they are transformed into a plethora of waste complexes~\cite{zhang2011dynamic}.
As the variety of waste complexes is large, it is non-trivial to selectively remove them and replace them with fresh gate complexes.
Consequently, the runtime of computation is generally constrained by the initial concentrations of species.
Moreover, the prospect of a long-running computational system that responds to changing inputs over time---a feature that may be desirable in a smart therapeutic, for example---becomes impractical at best.
Nevertheless, some work on renewable or long-running dynamic DNA-based systems is beginning to emerge~\cite{del2022dissipative,eshra2017dna}.
As for Tile Assembly systems, free tile monomers are typically locked into a final assembly (except for transiently at the growing edge)~\cite{evans2017physical}.
Not only does this eventually starve future computation, but often these consumed tiles serve no ongoing functional purpose save that of storing the history of computation.
Signal-passing Tile Assembly models~\cite{padilla2014asynchronous} are capable of modifying the state of tiles after incorporation, such as to remove them, but these are irreversible changes.

Is this waste---of free energy, monomers, and special complexes---unavoidable?
Turning to life, the maestra of molecular machines, we see that this is not the case.
Biochemical systems routinely recycle monomers---building up and breaking down macromolecules from their constituent components.
Their molecular machines (enzymes), which we loosely identify with the gate complexes mentioned earlier, are in general fully reusable---acting catalytically.
Lastly, free energy is not distributed casually across a large number of components.
Rather, a select few species are designated as free energy currency.
The prototypical example is that of ATP/ADP.
This pair of chemical species can be converted between one another, so by maintaining a ratio of the two that is far from equilibrium, the hydrolysis of ATP into ADP can be used to store free energy.
The missing ingredient is then to build molecular machines that couple a desired reaction to this hydrolysis, so the free energy can be supplied to other reactions.
In this way, living systems need only inject fresh free energy into these few subsystems in order to continually sustain the operation of all other processes.

From this we draw one main conclusion for the design of more effective molecular systems.
Free energy should be separated out from the molecular machines it powers.
As a consequence, the molecular machines will generally be catalytic, returning to their original state after performing their function.
This is best implemented by exploiting reversible dynamics, which is already a characteristic of the microscopic realm.
In reversible dynamics, the previous state of a system (and indeed, its entire history) is uniquely determined by the current state.
More strongly, we have time reversal symmetry: the laws of physics are the same both forwards and backwards.
An interesting corollary of these reversible dynamics is that the system's evolution (at least at the local level) can be reversed by inverting the free energy supply.
Additionally, the speed of evolution can be controlled by increasing or decreasing the amount of free energy stored.
Of course, these are not unknown ideas to the molecular programming community.
One of the earliest proposals for a molecular computer, arguably predating the field, is Bennett's enzymatic Turing machine~\cite{bennett1982thermodynamics}.
Bennett's design consisted of a polymeric tape, bespoke enzymes performing reversible computational steps, and a pool of free energy currency and structural monomers.
While not experimentally realized, other enzymatic approaches to molecular computation have been, such as the PEN toolbox~\cite{montagne2011programming} and PER~\cite{kishi2018programmable}.
As designing custom enzymes remains highly non-trivial, however, these approaches reuse naturally occurring enzymes such as DNA polymerase.
Non-enzymatic approaches include the proposed DNA polymer stack machines of Qian et al.~\cite{qian2011efficient} and reversible surface CRNs of Brailovskaya et al.~\cite{brailovskaya2019reversible}.

We present a novel abstract model of molecular programming, Reversible Bond Logic (RBL).
RBL systems consist of `atoms', which can be combined by `bonds' into arbitrarily complex `molecules'.
By programming the energy landscape of the atom-bond configurations, reversible paths through configuration space can be carved.
This will prove sufficient to implement a variety of systems, including catalytic molecular machines and a common free energy currency.
Moreover, as RBL atoms are manipulated in a reversible fashion, they can be freely reused.
Indeed, in RBL complex and diverse macromolecular structures can be built up and broken down as required.
The informational content of these structures can then be exploited to build modular and compositional computational entities, allowing for rich computational primitives such as looping, recursion, and subroutines.

It is important to note that reversibility has profound implications on the very nature of computation, and so will affect how are our programs are written.
Conventional approaches to computation make liberal use of non-invertible computational primitives, such as overwriting variables or (freely) merging branches of control flow.
The most immediate consequence is that it is not generally possible to determine the previous state of a computer from the current state, as there are often many possible consistent histories.
While beyond the scope of this paper, there is a deep and rich connection between irreversibility in computation and the thermodynamics of information~\cite{szilard1929entropieverminderung,landauer1961irreversibility}.
As we seek reversible dynamics, we will leverage the principles of reversible computing~\cite{bennett1982thermodynamics} and programming~\cite{lutz1986janus,yokoyama2008principles}\cite[pp.~11--23]{earley2021performance}.
Namely, we will avoid loss of information and take care to distinguish branches of control flow when they are merged.

The paper begins with a definition of RBL.
Next, RBL is introduced more concretely with the implementation of a walker powered by an external fuel supply.
Then, one possible scheme for computation is presented; a small selection of structural-manipulation primitives are introduced, and used to implement conditional branching, looping, and subroutines.
Additionally, the computation is coupled to the same fuel supply introduced earlier, in order to drive the otherwise-unbiased reversible computation forward.
The paper concludes with a discussion of the advantages and limitations of RBL.
Elaboration of the computational primitives introduced and the design decisions behind them is presented in the technical appendix.

\section{Definition}

\begin{figure}
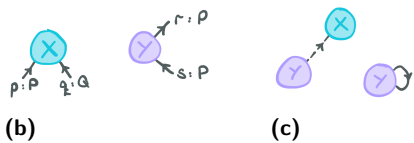

  \begin{subfigure}{0.55\linewidth}
    \small
    \setlength{\mathindent}{0pt}
    \begin{align*}
      \mathcal S &= (\{X,Y\}, \{P,Q\}, \{\text{solid}, \text{dashed}\}, \mathbb R_{\ge0}\cup\{\infty\}) \\
      P &= \{\text{solid},\text{dashed}\} \qquad Q = \{\text{solid}\} \\
      X &= (\{p,q\}, (p:P, q:Q), (p:\text{in}, q:\text{in}), e_X) \\
      Y &= (\{r,s\}, (r:P, s:P), (r:\text{out}, s:\text{in}), e_Y)
    \end{align*}\vspace{-1.5\baselineskip}
    \caption{}
    \label{fig:defex:formal}
  \end{subfigure}%
  \begin{subfigure}{0.25\linewidth}
    \fig{def-ex-atoms}
    \caption{}
    \label{fig:defex:atom}
  \end{subfigure}%
  \begin{subfigure}{0.2\linewidth}
    \fig{def-ex-config}
    \caption{}
    \label{fig:defex:cfg}
  \end{subfigure}
  \caption{%
    An example RBL scheme, $\mathcal{S}$.
    \textbf{(\subref{fig:defex:formal})}
    The formal definition of $\mathcal S$, which comprises two atom types, $X$ and $Y$, two port types, $P$ and $Q$, and two bond colors, solid and dashed.
    Energies may be any non-negative real or infinite.
    Each atom has two ports, and the energy configurations ($e_X$, $e_Y$) are left unspecified.
    \textbf{(\subref{fig:defex:atom})}
    The RBL atoms defined in diagrammatic form.
    The label $p:P$ indicates that the port has label $p$ and is of type $P$.
    \textbf{(\subref{fig:defex:cfg})}
    An example system configuration with one copy of $X$ and two of $Y$. As we keep the same port positions as in (\subref{fig:defex:atom}), we omit port labels for brevity.
    }
  \label{fig:defex}
\end{figure}

Informally, an RBL scheme consists of a set of atoms.
These atoms are decorated with ports of certain types, and like ports can form bonds between one another---whether between ports on two different atoms or on the same atom (a self-loop).
Ports additionally come in `oriented' pairs: `in' and `out'; a bond is then formed \emph{directionally} from an out-port to an in-port.
A given port type may support bonds of multiple different `colors' (not necessarily literal colors), which can be useful for signal passing.
The decoration of ports on an atom is geometry-free: there is no intrinsic ordering or positioning of ports, and they can be considered freely labile.
There is some resemblance of RBL to Thermodynamic Binding Networks (TBNs)~\cite{doty2017thermodynamic}, which will be discussed in \Cref{sec:concl}.

Key to RBL is the logic that dictates permissible bond formation and transitions.
Each possible configuration of an atom---namely, whether each of its ports is bonded and, if so, with what color bond---is assigned an energy.
As expected, higher energies are thermodynamically less likely with `$E=\infty$' representing an impossible configuration.
In fact, in practice configurations are usually only assigned energies of $E=0$ or $E=\infty$, i.e.\ possible or impossible, although other finite positive energies may be employed for short-lived transitional states.
A system configuration---the configuration of multiple atoms---may transition to another configuration if and only if:
  (1) the configurations differ by a single bond (equivalently, by two port states),
  (2) the configuration energies are finite.
That is, only one bond transition (whether formation, color change, or breakage) may occur at a given time.
The kinetics may depend on the energy and associated entropy changes, e.g.\ due to a change in particle number, but we leave a detailed treatment of this to future work.

Designing an RBL scheme then amounts to carefully picking the energy landscapes of each atom so as to carve out a guided path(s) through configuration space.
In general, we seek to restrict the system so that at any one time there are precisely two possible transitions: one to the previous state, and one to the next state.
Sometimes, however, it may be desirable to provide parallel paths through configuration space when the order of operations is unimportant.

Although we will introduce RBL schemes diagramatically, we define RBL formally for completeness.
An RBL scheme is specified by a tuple $(A,P,B,E)$ where $A$ is an indexed set of atoms, $P$ an indexed set of ports, $B$ a set of bond colors, and $E$ a set of possible energies.
The indices of $A$ and $P$ are used to uniquely label each type of atom and port.
A port $p$ consists of a subset of $B$, i.e.\ $P_p\subseteq B$, corresponding to the bond colors it can form.
An atom of type $\alpha$ consists of a tuple $(J,\vec p,\vec o,e)$, where $J$ is an index set enumerating the ports of the atom, $\vec p$ is a vector of port types indexed by $J$, $\vec o$ is a vector of port orientations indexed by $J$, and $e$ is an energy function.
An atom's energy function maps each possible configuration to its energy, i.e. $e:\prod_{\ell\in J}(\{\varepsilon\}\cup P_{p_\ell}) \to E$ where $\varepsilon$ corresponds to an unbound port.
An example scheme is shown in \Cref{fig:defex}.

A system configuration is a directed graph.
Each atom of type $\alpha$, with atom-tuple $(J, \vec p, \vec o, e)$, has an associated sub-graph.
This sub-graph consists of an `atom' node labeled $\alpha$ and a set of `port' nodes:
  for each $j\in J$ we add an edge labeled $j$ from the atom-node to the port-node, which is labeled $(p_j, o_j)$.
A configuration consists of a (disjoint) union of atom-graphs, where there may be any number $\in\mathbb N$ of copies of each type of atom-graph.
Bonds correspond to edges from a node $(p,\mathrm{out})$ to a node $(p,\mathrm{in})$ for some port type $p$, and are labeled with some color in $P_p$.
For each atom-sub-graph, we can compute its energy using the energy function $e$.
The system configuration is valid if each of the energies of its constituent atoms is finite, and the total energy of the system is given by the sum of these.
Two system configurations are \emph{adjacent} if they are both valid and they differ by a single port-port edge;
  this difference can be the presence/absence of such an edge, or a change in label.
A system may transition to any adjacent configuration.

\section{Walkers and Fuel}

A walker is a molecular machine that walks along a molecular track.
In nature these are often referred to as motor proteins, and serve a vital role in cells performing such tasks as transporting cargo or contracting muscle cells.
Molecular programmers have designed many instances of walkers.
Some walkers have no requirement of directional movement and perform a random walk through their domain, such as the `cargo-sorting robot' of Thubagere et al.~\cite{thubagere2017cargo}.
Other walkers, such as that of Yin et al.~\cite{yin2008programming}, achieve uni-directional movement along a one dimensional track by using high-free energy fuel to permanently block previously-visited track footholds.
This can be considered analogous to a Brownian ratchet.
Such `burnt bridge' approaches to walking have the disadvantage that it is not possible to send multiple walkers down the same track, and the track needs to be rebuilt in order to walk along it again.
Motor proteins do not modify their track.
Instead, processive movement is conferred by supply of an external free energy currency.
We will implement such a walker within RBL, starting with an unbiased walker that has no directional preference and then adding both polarity and an external fuel supply that couples directional movement to the free energy supply.

\subsection{Unbiased Walker}

\begin{figure}
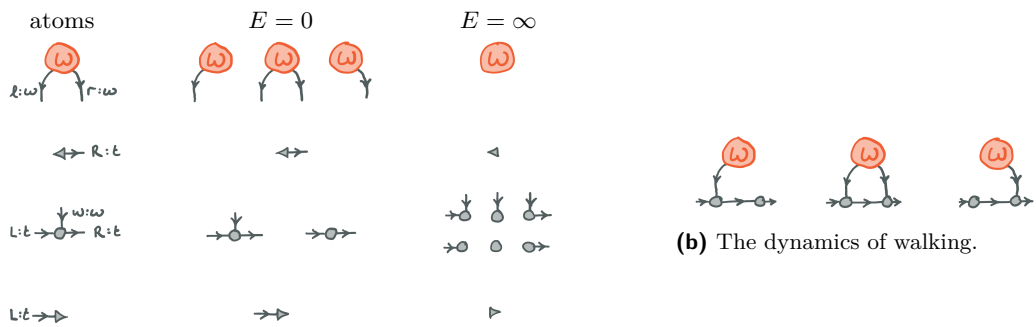

  \begin{subfigure}{0.55\linewidth}
    \centering
    \begin{tabular}{ccccc}
      atoms &\quad& $E=0$ &\quad& $E=\infty$\\
      \fig{walku-mono} &&
      \fig{walku-config-0} &&
      \fig{walku-config-inf}
    \end{tabular}
    \caption{The RBL energy configuration of an unbiased walker.}
    \label{fig:walku-config}
  \end{subfigure}\hfill%
  \begin{subfigure}{0.35\linewidth}
    \centering
    \fig{walku-ex}
    \caption{The dynamics of walking.}
    \label{fig:walku-ex}
  \end{subfigure}
  \caption{%
    An RBL implementation of an unbiased walker.
    \textbf{(\subref{fig:walku-config})}
    The system consists of four types of atom. 
    \atom{W} represents the walker;
      it has two feet, given by the left ($\ell$) and right ($r$) ports, which are both of type $w$ and can form bonds of a single color, solid.
    The other atoms represent the track, and comprise a left-cap, track monomers, and a right-cap.
    The track monomers have a $w$ port of type $w$, allowing the feet of the walker to bind to them.
    The other ports, $L$ and $R$ of type $t$ (monochromatic solid), allow the track monomers to polymerize; if capped on the ends they will form a linear track, else a circular loop.
    The energy configuration is defined by assigning each possible state to $E=0$ (allowed) or $E=\infty$ (impossible).
    The energy configuration of \atom{W} allows a walker to bind to one or two track monomers, but it can never dissociate from the track as the unbound state is impossible.
    Meanwhile, the energy configuration of the track enforces that it must be complete and cannot fall apart.
    Optionally, the `transition' state where \atom{W} binds to two track monomers can be assigned a higher energy, e.g.\ $E=1$.
    \textbf{(\subref{fig:walku-ex})}
    The energy configuration leads to the walking dynamics shown.
    The walker is free to step to the left or right, forming a transient state with two feet on the track.
    Then either foot can dissociate, possibly leading to net movement along the track.
    }
  \label{fig:walku}
\end{figure}

An unbiased walker is relatively easy to implement, and an appropriate RBL scheme is shown in \Cref{fig:walku}.
The walker itself is given by a single RBL atom, \atom{W}, and it walks along a track using its two `feet' ports to make bonds as needed.
All that is required for correct operation is for \atom{W} to be able to bind to the track with one or both feet, but not none.
In this way, it can make `progress' by putting down a second foot and lifting the first foot.
As the dynamics are reversible and history-free, there is of course no guarantee that it won't just lift the second foot and make no progress, nor is there any guarantee that it will make net progress over time.
Indeed, the statistics of movement are described by an unbiased random walk with zero mean displacement after $t$ steps, and $\sim\mathcal O(\sqrt{t})$ variance.

A brief commentary on the geometry of the walker atom \atom{W} is warranted.
If the ports of \atom{W} had fixed position, then there might be a risk of the walker not moving beyond the initial pair of track monomers.
Free rotation about the bonds or sufficient flexibility would fix this, however recall from the definition of RBL that the ports have no intrinsic position and are freely labile.
As such, the ports are free to `move' as needed.
A further consequence of being geometry-free is that a walker may `skip' an arbitrary number of positions along the track; there are a number of ways to address this, but we defer discussion of this to future work which will introduce a geometric model for RBL.

Also of note is the importance of the directionality of the bonds.
Without directionality, the $w$ ports of the track monomers would be able to bind to each other, as would the $\ell$ and $r$ feet of the walker \atom{W}.
Therefore directionality enforces a certain complementarity relation, which is not unfamiliar in the world of DNA nanostructures.

\subsection{Fuel}

In order to build a powered directional walker, we will need an external fuel supply.
Inspired by biochemistry, we choose to store our free energy in the population of two related species.
For simplicity, we design (\Cref{fig:fuel}) a bistable atom \atom{G} with two distinct states, $\atom G_+$ and $\atom G_-$, that can be interconverted by an external signal at the coupling port $f$.
Consequently, the equilibrium state corresponds to $[\atom G_+]=[\atom G_-]$ and free energy may be stored by increasing the ratio $[\atom G_+]:[\atom G_-]$ above 1.
Specifically, the amount of free energy is given simply by $kT\log([\atom G_+]/[\atom G_-])$.
For another system $X$ to usefully access this free energy, it needs to couple carefully to the fuel species.
The initial state of the system $X_{\text{init.}}$ must couple to the $\atom G_+$ state, and the final state of the system $X_{\text{fin.}}$ must couple to the $\atom G_-$ state.
That is, we need to implement the `reaction' $X_{\text{init.}}+\atom G_+ \rightleftharpoons X_{\text{fin.}}+\atom G_-$.
This is the purpose of the colorings available on the $f$ port, and will be shown more concretely in the next subsection.

\begin{figure}
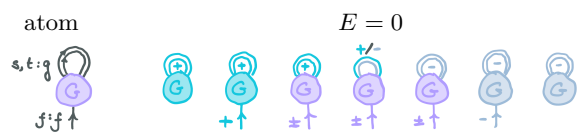

  \centering
  \begin{tabular}{clc}
    atom &\qquad& $E=0$ \\
    \fig{fuel-mono} &&
    \fig{fuel-config}
  \end{tabular}
  \caption{%
    An RBL implementation of a bistable fuel species, \atom{G}.
    The atom has five ports, $s_1/s_2$, $t_1/t_2$, and $f$.
    The ports $s_1/s_2$ form a self-loop, and so we refer to them together as $s$; similarly for the self-loop $t$.
    The loops are each of type $g$, which has two colors: $+$ (bright blue) and $-$ (gray-blue).
    The `coupling' port $f$ is of type $f$ and has three colors: $+$, $\pm$ (lilac), and $-$.
    The two stable states of \atom{G} correspond to both loops being $+$ ($\atom{G}_+$) or both being $-$ ($\atom{G}_-$).
    These can be interconverted via the coupling port, $f$, passing through a transitional state $\atom G_\pm$ where $s$ is colored $+$ and $t$ is colored $-$.
    The $+$ and $-$ colorings allow another RBL atom to distinguish $\atom{G}_+$ from $\atom{G}_-$, whilst the $\pm$ coloring allows interconversion.
    If there is more $\atom{G}_+$ than $\atom{G}_-$ then there will be a negative (favorable) free energy change $\Delta G$ associated with the reaction $\atom{G}_+ \to \atom{G}_-$.
    Note that we only show the possible states ($E=0$); all other configurations may be presumed impossible ($E=\infty$).
    }
  \label{fig:fuel}
\end{figure}

Critical to the design is the double self-loop on \atom{G}.
A single self-loop would be able to autonomously transition between bond colors due to the rules of RBL, but a double self-loop allows us to impose an (infinite) energy barrier between the $\atom G_+$ and $\atom G_-$ states.
This barrier is lowered by the action of the coupling port, $f$, which thus acts as a catalyst for the interconversion of states.

\subsection{Biased Walker}
\label{sec:walk}

\begin{figure}
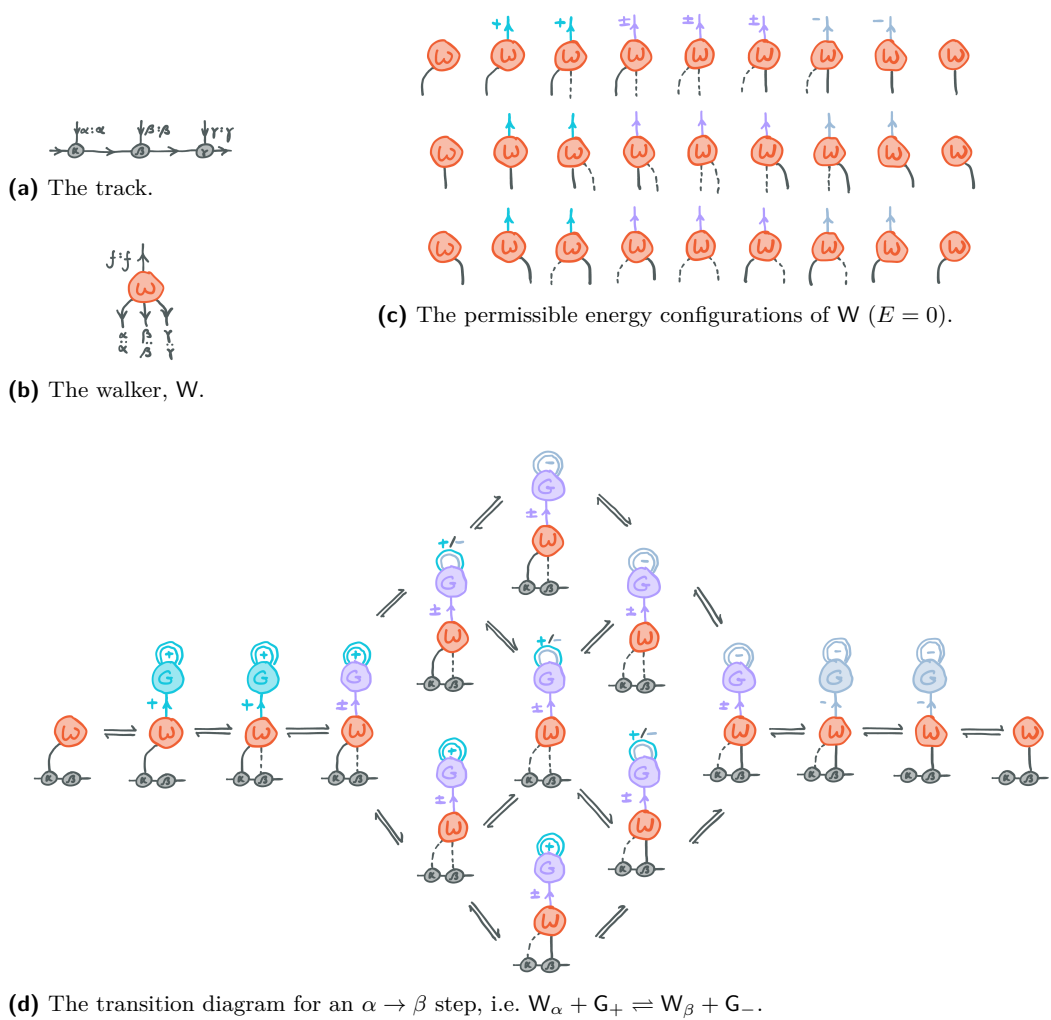

  \begin{minipage}[b]{0.25\linewidth}
    \begin{subfigure}{\linewidth}
      \centering
      \fig{walk-track}
      \caption{The track.}
      \label{fig:walk-track}
    \end{subfigure}%
    \\[\baselineskip]
    \begin{subfigure}{\linewidth}
      \centering
      \fig{walk-mono}
      \caption{The walker, \atom{W}.}
      \label{fig:walk-mono}
    \end{subfigure}
  \end{minipage}
  \hfil\hfil
  \begin{subfigure}[b]{0.6\linewidth}
    \centering
    \fig{walk-config}
    \caption{The permissible energy configurations of \atom{W} ($E=0$).}
    \label{fig:walk-config}
  \end{subfigure}
  \hfil%
  \\[\baselineskip]
  \begin{subfigure}{\linewidth}
    \centering
    \fig{walk-ex}
    \caption{The transition diagram for an $\alpha\to\beta$ step, i.e.\ $\atom W_\alpha + \atom G_+ \rightleftharpoons \atom W_\beta  + \atom G_-$.}
    \label{fig:walk-ex}
  \end{subfigure}
  \caption{%
    An RBL implementation of a biased walker.
    \textbf{(\subref{fig:walk-track})}
    The track consists of a polymer of $\alpha$, $\beta$, and $\gamma$ track monomers.
    Their implementation is nearly identical to that of the track in \Cref{fig:walku-config}, except that:
      (1) the $L$/$R$ ports enforce the $\cdots\alpha\beta\gamma\alpha\cdots$ sequence, by $R_\alpha$/$L_\beta$, $R_\beta$/$L_\gamma$, and $R_\gamma$/$L_\alpha$ each having a distinct port type;
      and (2) each monomer's foothold port has a distinct type ($\alpha$, $\beta$, or $\gamma$).
    Another difference is that the foothold ports have two bond colors: `solid' and `dashed', with the latter corresponding to a transitional state.
    The track can again be linear or cyclic.
    \textbf{(\subref{fig:walk-mono})}
    The walker atom \atom{W} has three feet which specifically bind each type of track monomer.
    It also has a fuel coupling port $f$.
    \textbf{(\subref{fig:walk-config})}
    The set of permissible energy configurations of \atom{W} implements each of the `reactions' in \Cref{rxns:walk}, one per row.
    Considering the top row, we start with \atom{W} bound to an $\alpha$ track monomer.
    If it then binds to $\atom G_+$, it can begin to attempt an $\alpha\to\beta$ step
      (looking to the end of the third row, if it instead bound $\atom G_-$ then it would begin to attempt an $\alpha\to\gamma$ backstep).
    From here, it tentatively (dashed bond color) places a foot on the $\beta$ monomer.
    Then it changes the color of $f$ to $\pm$ (lilac), the transitional state.
    While the fuel is interconverting between $\atom G_+$ and $\atom G_-$, it changes its $\alpha$ binding to dashed and its $\beta$ binding to solid in two steps.
    Then, it attempts to change the color of $f$ to $-$ (gray-blue), which would indicate the successful consumption of fuel ($\atom G_+\to\atom G_-$).
    It can then dissociate its $\alpha$ foot followed by the spent fuel $G_-$.
    While we have described this step as if \atom{W} has `intent', it is important to remember that this is only for narrative benefit; the system is really performing a random walk through configuration space, with bias provided by the free energy stored in the concentration of the fuel.
    \textbf{(\subref{fig:walk-ex})}
    This $\alpha\to\beta$ step process is expounded in the transition diagram, which shows which system configurations can be reached from which other system configurations.
    This makes clear that the consumption of fuel and transition of foothold states occur in parallel.
    }
  \label{fig:walk}
\end{figure}

To construct a walker with a preferred walking direction, we must introduce polarity into both the track and the walker.
In nature, a single monomer type (e.g.\ actin) is sufficient for this by taking advantage of its spatial substructure.
In RBL, however, such spatial substructure is non-existent.
Instead we employ an alternating sequence of monomers.
Two monomers, i.e.\ a sequence $\cdots\alpha\beta\alpha\beta\cdots$, would be insufficient to distinguish forward from backward movement; the minimum sequence to introduce polarity consists of three monomers, i.e.\ $\cdots\alpha\beta\gamma\alpha\beta\gamma\cdots$.
With such a polarized track, the walker can then be modified such that, from an initial position on monomer $\alpha$, the forward step takes it to $\beta$ and the backward step to $\gamma$.
Similar behavior applies to the other track positions.
Of course, without a fuel supply forward and backward steps are identical and so movement is still non-directional.
To complete the implementation, we couple these steps to consumption of fuel as follows:
\begin{align}
  \atom W_\alpha + \atom G_+ &\rightleftharpoons \atom W_\beta  + \atom G_- &
  \atom W_\beta  + \atom G_+ &\rightleftharpoons \atom W_\gamma + \atom G_- &
  \atom W_\gamma + \atom G_+ &\rightleftharpoons \atom W_\alpha + \atom G_-
  \label{rxns:walk}
\end{align}
There are many possible designs, but we choose a three-footed walker with a fuel coupling port $f$.
Each foot is specific to a particular monomer type, i.e.\ $\alpha$, $\beta$, or $\gamma$, and the dynamics of `walking' resemble a wheel rolling along the track.
This design is elaborated in \Cref{fig:walk}.
The main challenge in designing such an RBL system is to ensure that correct system configurations cannot jump to invalid configurations.
For example, if \atom{W} did not maintain a foothold on both $\alpha$ and $\beta$ during an $\alpha\to\beta$ step, then it would be possible to `jump' into either of the other step processes.
To assist in this, a computational suite (to be released in the future) for designing and testing RBL schemes was developed.
To finish, we prove (\Cref{thm:walk}, \Cref{app:prf}) that the designed scheme has the desired properties:

\begin{restatable}{theorem}{biasRandomWalk}
  \label{thm:walk}
  The dynamics of the biased walker, in the long-run, are that of a biased random walk.
\end{restatable}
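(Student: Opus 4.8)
The plan is to reduce the full many-body dynamics to a one-dimensional Markov chain and analyze that chain. First I would observe that, because every energy function in the scheme is local to a single atom, a walker \atom{W} together with the fuel it is currently bound to and the track forms a sub-configuration whose transitions do not depend on other walkers (several walkers on one track interact only through foothold exclusion, which is irrelevant in the dilute limit). So it suffices to study a single \atom{W} on a linear or cyclic track in contact with a fuel bath at fixed concentrations $[\atom G_+],[\atom G_-]$. I would then enumerate the configurations reachable by this walker and single out the \emph{rest states}: those in which \atom{W} is bound to exactly one track monomer with its $f$ port unbound. There is one rest state per foothold, and by the alternating $\cdots\alpha\beta\gamma\cdots$ structure of \Cref{fig:walk-track} we may index them by $\mathbb Z$ (or $\mathbb Z/n\mathbb Z$ in the cyclic case).

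The key structural lemma --- and the main obstacle --- is to verify that the energy configuration of \Cref{fig:walk-config} carves out exactly the intended paths: from each rest state the only available transitions enter one of two \emph{step gadgets} (a forward and a backward step); each step gadget is the parallel diamond of \Cref{fig:walk-ex} joining two adjacent rest states; and no transition inside a gadget reaches a configuration outside it --- in particular, a partially completed $\alpha\to\beta$ step cannot jump into the $\alpha\to\gamma$ backstep gadget, and \atom{W} never detaches from the track (its unbound foot states being $E=\infty$). This is a finite case check over the allowed configurations of \atom{W}, assisted by the computational suite mentioned above, and it is precisely the assertion that the scheme was designed correctly. Granting it, contracting each step gadget to a single edge yields a coarse-grained Markov chain on the rest states that is a nearest-neighbour walk on $\mathbb Z$ (with self-loops, corresponding to a step begun and then reversed), in which each forward edge effects the net conversion $\atom G_+\to\atom G_-$ and each backward edge the reverse, exactly as in \Cref{rxns:walk}.

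Next I would compute the bias. Adopting the minimal kinetics implicit in the definition --- rates obeying local detailed balance with respect to the configuration energies, together with mass-action factors for the two bimolecular (fuel-binding) transitions --- the product of forward/backward rate ratios around a step gadget telescopes to the Boltzmann factor of its endpoints times the induced concentration factor; since the two endpoint rest states have equal internal energy, the only surviving factor is $[\atom G_+]/[\atom G_-]$. Hence the rates of completing a forward versus a backward step satisfy $p_+/p_-=[\atom G_+]/[\atom G_-]=\exp(\Delta G/kT)$, with $\Delta G$ the fuel free energy of \Cref{fig:fuel}. Finally, taking the walker's signed displacement $D_t$ to be the net number of completed steps up to time $t$ (on a cyclic track, the winding number; the ``skip'' freedom noted earlier is resolved only once a geometric constraint is imposed, so this is the unambiguous quantity), the coarse-grained chain is an irreducible nearest-neighbour walk with renewal structure, so a standard renewal/central-limit argument gives $\mathbb E[D_t]=vt+o(t)$ with drift $v$ of the sign of $p_+-p_-$ and $\operatorname{Var}(D_t)=\Theta(t)$ --- a biased random walk, which degenerates to the unbiased walker of \Cref{fig:walku} precisely when $[\atom G_+]=[\atom G_-]$.
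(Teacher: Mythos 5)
Your proposal is correct and follows essentially the same route as the paper's proof: both coarse-grain the intermediate states of a step into a single effective nearest-neighbour transition between rest states $W_n$ and $W_{n\pm1}$, and both use reversibility/detailed balance of the intermediate chain to show that the only surviving bias is the factor $[\atom G_+]/[\atom G_-]$ entering through the bimolecular fuel-binding steps. The paper's steady-state (``burn-in'') argument additionally pins down the absolute effective rate constants $k_2[\atom G_+]$ and $k_2[\atom G_-]$, hence an explicit drift $k_2bt$ with $b=([\atom G_+]-[\atom G_-])/([\atom G_+]+[\atom G_-])$, whereas your telescoping local-detailed-balance argument yields only the ratio $p_+/p_-$ (sufficient for the stated conclusion); the gadget-isolation lemma you rightly flag as the real content is likewise taken on trust from the transition diagram in the paper's own proof.
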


\section{Data and Computation}

The advantages of structured data and programming abstractions are well known to users of high-level programming languages.
In this section, we will develop a set of conventions and motifs for representing and computing with structured data in RBL.
We will use these to implement three example `programs':
  (1) logical negation of a Boolean value, with which we will introduce \textbf{conditional branching};
  (2) addition of natural numbers (using a Peano representation), with which we will introduce \textbf{looping};
  (3) squaring of natural numbers, using addition as a \textbf{subroutine}.
A recursive implementation of addition is left as an exercise for the reader.
High level schemata to motivate these implementations are presented in \Cref{fig:rxn-hi-lvl}.

\begin{figure}[p]
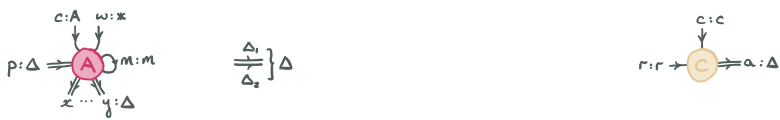

  \begin{subfigure}{\linewidth}
    \centering
    \fig{rxn-not-t}\hskip 1cm
    \fig{rxn-not-f}
    \caption{%
      \textbf{Logical negation.}
      A Boolean value, \atom T or \atom F, is tagged with an atom $(\neg)$ which serves as a `reference' to the logical negation `function'.
      Then, two `reactions' are implemented recognizing each of the possible inputs.
      The return value is tagged with $(\neg')$ to represent the result of the logical negation.
      }
    \label{fig:rxn-not}
  \end{subfigure}
  \begin{subfigure}{\linewidth}
    \figSm{rxn-add-1}\hfill
    \figSm{rxn-add-3}\hfill
    \figSm{rxn-add-2}
    \caption{%
      \textbf{Addition.}
      As with logical negation, we use two special atoms to represent an addition to be performed, $(+)$, and the result of an addition, $(+')$.
      As the computation must be reversible, we retain one of the inputs: for example, after adding $a=3$ and $b=4$, we would return $c=7$ and $a'=3$.
      The inputs and outputs are represented as Peano numbers, i.e.\ polymers of the form \atom{S}\bond\atom{S}\bond$\cdots$\bond\atom{Z} where the number of \atom{S} atoms corresponds to the number.
      For example, $3\equiv\text{\atom{S}\bond\atom{S}\bond\atom{S}\bond\atom{Z}}$.
      The core of the computation is a loop (middle reaction, tagged by $(\tilde +)$), which iteratively decrements the bottom-left number and increments the top and bottom-right numbers.
      To see that this reaction functions as a loop, notice that the product can serve as a reactant to the same reaction while the loop condition holds.
      Note also how any possible molecule can participate in at most two reactions (corresponding to a forward and backward transition), giving a deterministic path from input to output.
      As the bottom-left number is originally $a$ and the top number is originally $b$, these become respectively $0$ and $c=a+b$.
      The bottom-right number starts at 0 and becomes $a'=a$.
      The left reaction is responsible for entering the loop, and the right reaction for exiting the loop.
      }
    \label{fig:rxn-add}
  \end{subfigure}
  \begin{subfigure}{\linewidth}
    \centering
    \figSm{rxn-sq-1}\hskip 3cm
    \figSm{rxn-sq-2}\\[1em]
    \figSm{rxn-sq-3}\hfill
    \figSm{rxn-sq-4}\hfill
    \figSm{rxn-sq-5}
    \caption{%
      \textbf{Squaring.}
      A number $n$ can be squared by using the identity $n^2=(2n-1)+(2n-3)+\cdots+3+1$:
        that is, $n^2$ is the sum of the first $n$ odd numbers.
      The implementation of squaring uses $(\square)$ and $(\square')$ as initial and final tags.
      It takes a Peano number $n$ as input and returns a Peano number $s=n^2$ as output.
      We implement squaring using a loop (tagged by $(\tilde\square)$) with two variables, $n$ and $s$.
      Initially, $n$ is the number to be squared and $s=0$.
      On each iteration, we decrement $n$, add this to $s$ twice, and then increment $s$; the net result is $s\mapsto s+(2n-1)$ and $n\mapsto n-1$.
      The addition is performed by using the program in \Cref{fig:rxn-add} as a subroutine.
      By starting with the largest odd number and ending at 1, we ensure that we consume the value $n$ (by reducing it to 0) and produce only $s=n^2$ as an output, even though addition always returns one of its inputs.
      }
    \label{fig:rxn-sq}
  \end{subfigure}
  \caption{%
    High level schematic representations of three simple programs.
    These are not themselves RBL systems, but will be used as a blueprint for the construction of equivalent RBL systems.}
  \label{fig:rxn-hi-lvl}
\end{figure}

\begin{figure}[p]
  \begin{subfigure}[t]{0.6\linewidth}
    \centering
    \fig{data-mono-data}
    \qquad
    \fig{data-mono-dp}
    \caption{%
      The general form of a data-atom and data-port ($\Delta$).
      Its $c$ and $w$ ports correspond to the specific and wildcard control ports respectively, and the $m$ port pair corresponds to the `monomer' self-loop.
      The $p$ data-port binds the data-atoms parent, and $x\cdots y$ provide bindings to some number of children.}
    \label{fig:data-mono:data}
  \end{subfigure}%
  \hfill%
  \begin{subfigure}[t]{0.35\linewidth}
    \centering
    \fig{data-mono-C}
    \caption{%
      The \atom{C} atom.
      The $a$ data-port binds some computational data, the $r$ port binds its root or origin, and the $c$ port may be bound by some compuzyme.}
    \label{fig:data-mono:C}
  \end{subfigure}
  \caption{%
    The RBL atoms for data-atoms and the \atom{C} atom.
    The energy configurations and dynamics are explored in \Cref{app:motif}.}
  \label{fig:data-mono}
\end{figure}

\subsection{Motifs}
\label{sec:motif}

Recall that we intend for the computational systems we design to be modular, compositional, and to support component reuse.
To realize these properties, we will require a common convention for the representation of data.
Moreover, as this data may be reused in different contexts (for example, Peano numbers may be added or multiplied), the data itself should generally be inert.
Computation will be performed by dedicated catalytic `machines', taking inspiration from biochemical systems.
We will call these `compuzymes' (computational enzymes).
Consider the schematic for addition (\Cref{fig:rxn-add}).
In principle, we will be able to construct RBL molecules corresponding to each of the three types of abstract molecule in the scheme, and we can also construct a compuzyme implementing each of the three reactions.
To demonstrate the power of RBL, however, we will construct a single compuzyme performing the entire addition (loop included).

Data will be constructed from `data-atoms', RBL atoms with a certain structure.
Consider Peano numbers, where a number is either zero (\atom Z) or the successor (\atom S) of another number.
In haskell, this may be represented as \texttt{data $\mathbb N$ = Z | S $\mathbb N$}.
We can therefore see that the \atom{S} atom has one `child' while the terminal \atom{Z} atom has no children.
For an example of a data-atom with more than one child, consider the nodes of a binary tree%
  \footnote{Recall that the definition of a binary tree in haskell is
            \texttt{data Tr $a$ = Lf | Nd (Tr $a$) $a$ (Tr $a$)}.}.
These data-atoms would have three children, two for its child nodes (or leaves), and one for its associated data.
As all data-atoms could be the child of another atom, each data-atom also has a parent.
Therefore, an RBL representation of a data-atom must have an in-port for a parent to bind, and an out-port for each of its children; we refer to these special ports as `data-ports'.
These ports will in fact be pairs of ports: as with the double self-loop on the fuel atom \atom{G}, a pair of bonds can be used to provide an energy barrier against unintentional bond-breakage.

In addition to parent and child data-ports, a data-atom needs a few other ports.
To manipulate data, we add two control ports.
These allow compuzymes to externally signal data-atoms to alter their configuration.
One control port is specific to the type of data-atom; for example, the control port of an \atom{S} atom is distinct from that of a \atom{Z} atom.
The other control port is a `wildcard' control port that is common to all data-atoms, and enables (limited) manipulation of variable data.
Lastly, we have a pair of ports that can form a self-loop.
This self-loop is present only on free monomers.
The reason why this is required is somewhat subtle, and is explained in \Cref{app:motif}.

The final key component of data is the \atom{C} atom.
Some data represent computations (either to be performed or that have completed), such as the molecules in \Cref{fig:rxn-hi-lvl}.
These data are capped by a \atom{C} atom, e.g.\ \atom{C}\bond$(\neg)$\bond\atom{T}.
The \atom{C} atom therefore identifies computational data.
It may seem extraneous, but it provides important indirection for subcomputations by allowing two compuzymes to interact with the same computational data simultaneously.
As such, \atom{C} has a `root' port ($r$) bound by the computation's origin, a `compuzyme' port ($c$) optionally bound by a compuzyme, and a data-port ($a$) binding the actual computational data.

\begin{figure}
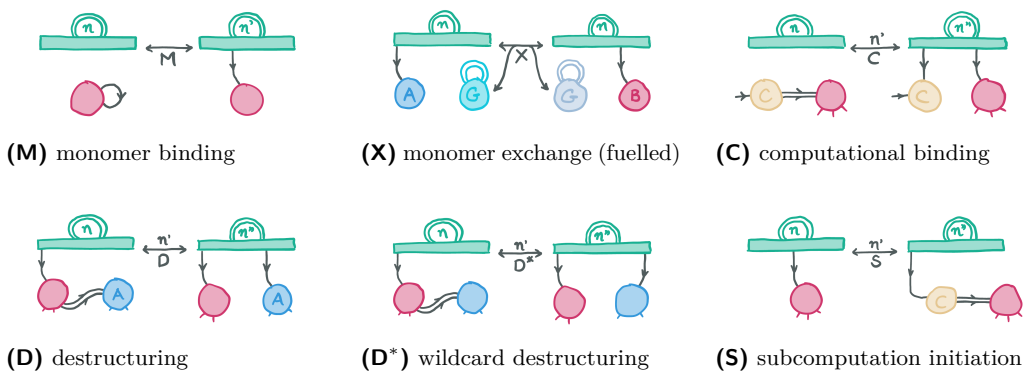

  \begin{subfigure}{0.3\linewidth}
    \centering
    \renewcommand\thesubfigure{M}
    \fig{comp2-mbind}
    \subcaption{monomer binding}
    \label{fig:motif:mbind}
  \end{subfigure}%
  \hfil%
  \begin{subfigure}{0.3\linewidth}
    \centering
    \renewcommand\thesubfigure{X}
    \fig{comp2-mono-xchg2}
    \subcaption{monomer exchange (fuelled)}
    \label{fig:motif:mono-xchg}
  \end{subfigure}%
  \hfil%
  \begin{subfigure}{0.3\linewidth}
    \centering
    \renewcommand\thesubfigure{C}
    \fig{comp2-cbind}
    \subcaption{computational binding}
    \label{fig:motif:cbind}
  \end{subfigure}%
  \\[\baselineskip]
  \begin{subfigure}{0.3\linewidth}
    \centering
    \renewcommand\thesubfigure{D}
    \fig{comp2-destruct}
    \subcaption{destructuring}
    \label{fig:motif:destruct}
  \end{subfigure}%
  \hfil%
  \begin{subfigure}{0.3\linewidth}
    \centering
    \renewcommand\thesubfigure{D$^\ast$}
    \fig{comp2-destruct-w}
    \subcaption{wildcard destructuring}
    \label{fig:motif:destruct-w}
  \end{subfigure}%
  \hfil%
  \begin{subfigure}{0.3\linewidth}
    \centering
    \renewcommand\thesubfigure{S}
    \fig{comp2-sub-init}
    \subcaption{subcomputation initiation}
    \label{fig:motif:sub-init}
  \end{subfigure}%
  \caption{%
    The five core motifs underlying our computational convention.
    Only the net effect of the motif is shown; the detailed RBL implementations are given in \Cref{app:motif}.
    The green rectangle is a compuzyme, and the encircled value $n$ corresponds to the internal compuzyme state.
    In all motifs except X, there are one or two compuzyme state changes (from $n$ to $n'$, and possibly to $n''$).
    \textbf{(\subref{fig:motif:mbind})}
    The binding of a free monomer (pink data-atom).
    \textbf{(\subref{fig:motif:mono-xchg})}
    The exchange of a monomeric data-atom \atom{A} for \atom{B}.
    \textbf{(\subref{fig:motif:cbind})}
    The binding of some computational data (pink data-atom, possibly with children).
    \textbf{(\subref{fig:motif:destruct},\subref{fig:motif:destruct-w})}
    The destructuring of a child (blue data-atom) from its parent (pink data-atom).
    In the first case, the identity of the child data-atom (\atom A) is known by the compuzyme, while in the second case it is not.
    \textbf{(\subref{fig:motif:sub-init})}
    The initiation of a subcomputation.
    The pink data-atom (and children) represents some pre-prepared computational data.
    It is bound to a fresh \atom{C} atom, exposing it to the action of other compuzymes.
  }
  \label{fig:motif}
\end{figure}

The RBL atoms for data-atoms and the \atom{C} atom are shown in \Cref{fig:data-mono}.
These data structures can then be operated on by `compuzymes' to perform computation.
Compuzymes are special atoms that behave as a platform for structural manipulation, and are typically represented as a rectangle.
By recruiting data-atoms and interacting with them via their control ports, a sequence of structural manipulations can be effected.
Compuzymes also maintain an internal state, via the color of a pair of self-loops.
These state colors are typically numbered, but any label will suffice.
Internal states are useful for distinguishing between similar configurations, such as may occur during branches, loops, or long sequences.
There are five core compuzyme motifs that are needed to perform arbitrary manipulations.
These are shown in \Cref{fig:motif}, and their detailed implementations may be found in \Cref{app:motif}, along with a description of the bond colorings of the data-ports and control ports.
As RBL is reversible, each of these motifs may also act in reverse; for example, \motif D may also be used to bind a child data-atom to a parent.
By convention, \motif X also expends one unit of fuel to drive computation forward, but in principle fuel coupling can occur at any point(s) during compuzyme operation.

To construct a compuzyme in RBL, we should first prepare an (abbreviated) transition diagram of its operation using the above motifs.
Having done this, we will be able to determine what ports the compuzyme will require.
With this, a suitable RBL atom can be designed.
Then, the RBL implementations of each motif (shown in \Cref{app:motif}) dictate which configurations are possible.
For the most part, each motif will correspond to a distinct compuzyme state, and so the sets of possible configurations will generally be disjoint.
We need merely extend the configurations to include the state of the other compuzyme bonds, which will be static within the motif.
In some cases, particularly with branching control flow, motifs may share configurations.
Thus the complete RBL description of the compuzyme is simply the union of all the configurations of the motifs.

\subsection{Logical Negation}

\begin{figure}
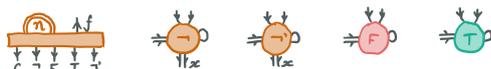
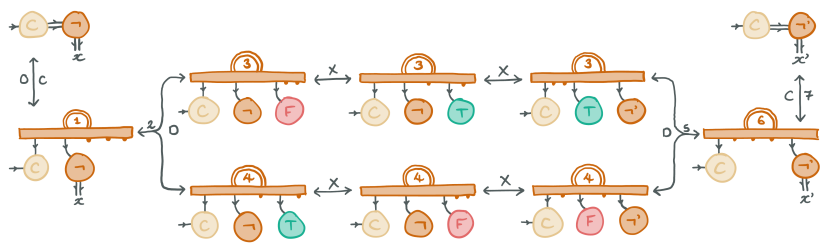

  \begin{subfigure}{\linewidth}
    \centering
    \fig{prog-cnot-mono}
    \caption{%
      The RBL atoms for logical negation.
      From left to right, the compuzyme \atom{Not}, the computational data-atoms $(\neg)$ and $(\neg')$, and the data-atoms for Boolean values, \atom{F} and \atom{T}.
      \atom{Not} has ports for fuel ($f$), atom \atom{C} ($C$), and the control ports of each of the data-atom types; port labels and types are the same.}
    \label{fig:cnot:mono}
  \end{subfigure}
  \\[0.5em]
  \begin{subfigure}{\linewidth}
    \centering
    \figSm{prog-cnot-tx}
    \caption{%
      The abbreviated transition diagram for logical negation, showing the motifs employed.
      The compuzyme \atom{Not} starts and ends in its pure unbonded atomic state, and so acts catalytically.
      The $f$ port is not shown, and unbound ports are shown with a dot.
      To aid comprehension, port order is consistent with that shown in (\subref{fig:cnot:mono}).}
    \label{fig:cnot:tx}
  \end{subfigure}
  \caption{%
    The implementation of logical negation using conditional branching.}
  \label{fig:cnot}
\end{figure}

With these motifs we are now in a position to implement our three example programs.
One approach to implement logical negation is to prepare two compuzymes performing each of the two `reactions' in \Cref{fig:rxn-not}, $\atom{Not}_{\atom{T}}$ and $\atom{Not}_{\atom{F}}$ respectively.
The compuzyme implementing $\neg\atom{T}$ would then use \motif C to bind a $(\neg)$ computational data-atom, and then \motif D specialized to \atom{T}.
It can then use \motif X twice to swap \atom{T} for \atom{F} and $(\neg)$ for $(\neg')$, followed by \motif D to bind \atom{F} to $(\neg')$, and finally \motif C to eject the result of the computation.
The compuzyme for $\neg\atom{F}$ would be very similar.
Of course, it is possible that compuzyme $\atom{Not}_{\atom{T}}$ might inadvertently bind \atom{C}\bond$(\neg)$\bond\atom{F}, but the specificity of \motif D would mean the compuzyme stalls.
The only available option would be for it to backtrack; only the second compuzyme can complete the computation in this case.
Moreover, only once the correct input to the compuzyme is bound can it expend fuel.

However, RBL allows us to go beyond this and implement conditional branches in control flow without backtracking.
That is, we can create a single compuzyme \atom{Not} implementing both cases.
We can overlap two (or more) instances of \motif D, each recognizing different data-atoms.
The motifs would use the same states $n$ and $n'$, but differing states $n''$ for each case; these distinct states $n''$ then allow us to distinguish the different branches of control flow.
This is shown in \Cref{fig:cnot}.
Indeed, the dynamics of \atom{Not} are simply the conjunction of those for $\atom{Not}_{\atom{T}}$ and $\atom{Not}_{\atom{F}}$.
Note that, when the two branches of control flow are later merged, this is a reversible operation: the two branches have to be \emph{clearly} distinct.
Here this is the case because we are using \motif D against two cases, \atom{T} and \atom{F}, `combining' the data-atoms into a single variable data-atom $x'$.

\subsection{Addition}

\begin{figure}[ht!]
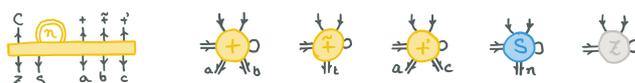
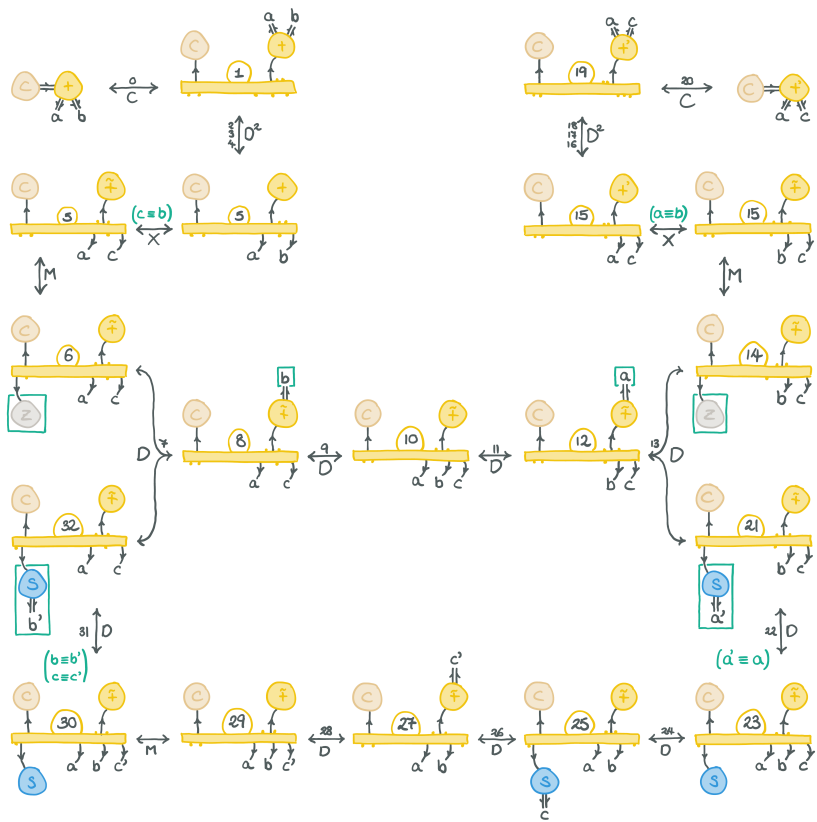

  \begin{subfigure}{\linewidth}
    \centering
    \fig{prog-add-mono}
    \caption{%
      The RBL atoms for addition.
      From left to right, the compuzyme \atom{Add}, the computational data-atoms $(+)$, $(\tilde +)$, and $(+')$, and the data-atoms for Peano numbers, \atom{S} and \atom{Z}.
      \atom{Add} has ports for fuel ($f$, not shown), atom \atom{C}, and control ports for each type of data-atom, but also wildcard control ports $a$, $b$, and $c$ for holding variable values during computation; port labels and types are the same, except for the variables which are of type $\ast$.}
    \label{fig:add:mono}
  \end{subfigure}
  \\[0.5em]
  \begin{subfigure}{\linewidth}
    \centering
    \figSm{prog-add-tx}
    \caption{%
      The abbreviated transition diagram for addition, showing the motifs employed.
      Sometimes, two \motif Ds are elided into a single transition (shown as $\mathsf{D}^2$).
      Green annotations show where variables are renamed between transitions or where two alternative data-atoms are bound to a variable position.
      To aid comprehension, port order is consistent with that shown in (\subref{fig:add:mono}) except that atoms above the compuzyme are flipped vertically; additionally, unbound ports are shown with a dot, and the compuzyme state is shown with a single circle.
      Addition is implemented as a loop where the value of $a$ is added to both $b$ (renamed to $c$, and yielding $c=a+b$) and $0$ (yielding a copy of $a$).
      The top left set of transitions enter the loop.
      The center set of transitions implement the control flow of the loop:
        on the left, they merge branches corresponding to loop entry and loop continuation, using $b$ to discriminate these branches;
        on the right, they branch into the exit path or the main loop body, using $a$ to discriminate these branches;
        the middle transitions tidy up/set up these branch operations.
      The bottom set of transitions implement the loop body, decrementing $a$ and incrementing both $b$ (now representing the copy of $a$) and $c$.}
    \label{fig:add:tx}
  \end{subfigure}
  \caption{The implementation of addition using looping.}
  \label{fig:add}
\end{figure}

Similarly, we could implement addition using three compuzymes as indicated in \Cref{fig:rxn-add}.
However, we again leverage the power of RBL to show how looping can be implemented directly with a single compuzyme \atom{Add}.
This implementation is given in \Cref{fig:add}.
Note that in this condensed implementation, the atom $(\tilde +)$ now just serves to hold temporary variables and so only has one child compared to three in our original schematic.
Looping is not much different from the control flow for logical negation;
  whereas for conditional branching, we had a branch followed by a merge, here we have a merge followed by a branch.
At the beginning of a loop, we merge control flow from two branches: entry and loop-continuation.
This would usually be trivial, but in a reversible context we need to be careful to conditionally distinguish these two events (so that, in reverse, we know when to `un-start' the loop).
At the end of a loop, we branch into exit and loop-continuation.

In our implementation of addition, we start with $(a,b)$ and end with $(a,c)$ where $c=a+b$.
We begin by renaming $b$ to $c$, to which we will add the value of $a$.
Then we set the now free variable $b$ to 0, to which we will also add the value of $a$.
In our loop, the entry condition is therefore that the accumulator ($b$) is 0; if it is greater, then we have performed at least one iteration.
The exit condition is that $a=0$, as we decrement it on each iteration.
Therefore, we merely need to use \motif D to check whether these variables are \atom{Z} or \atom{S}\bond$n$.

Due to reversibility, this program can also perform subtractions.
To achieve this, one could either reverse the bias of the free energy currency, duplicate the implementation with reversed fuel coupling, or introduce a `switch' for the direction of computation.

\subsection{Squaring}

\begin{figure}[ht!]
  \begin{subfigure}{\linewidth}
    \centering
    \fig{prog-sq-mono}
    \caption{%
      The RBL atoms for squaring.
      From left to right, the compuzyme \atom{Sq} and the computational data-atoms $(\square)$, $(\tilde\square)$, and $(\square')$.
      \atom{Sq} has ports for fuel ($f$, not shown), atom \atom{C}, and control ports for the computational data-atoms shown, Peano data-atoms, two variables ($n$ and $s$), as well as a set of ports for the addition subroutine: the computational data-atoms $(+)$ and $(+')$ and the atom \atom{C}; port labels and types are the same, except for the variables which are of type $\ast$.}
    \label{fig:sq:mono}
  \end{subfigure}
  \\[0.5em]
  \begin{subfigure}{\linewidth}
    \centering
    \fig{prog-sq-tx}
    \caption{%
      The abbreviated transition diagram for squaring, showing the motifs employed.
      Some motifs are elided into a single transition for brevity.
      Green annotations show where variables are renamed between transitions or where two alternative data-atoms are bound to a variable position.
      To aid comprehension, port order is consistent with that shown in (\subref{fig:add:mono}) except that atoms above the compuzyme are flipped vertically; additionally, unbound ports are shown with a dot, and the compuzyme state is shown with a single circle.
      The implementation of squaring consists of a loop which on each iteration maps $s\mapsto s+(2n-1)$ and $n\mapsto n-1$, with $s$ initially 0 and $n$ ending at 0.
      The loop structure is essentially the same as for addition (\Cref{fig:add}).
      To perform the map $s\mapsto s+(2n-1)$, we first decrement $n$, then add the new value of $n$ to $s$ twice before incrementing it.
      The additions are performed using \atom{Add} by preparing computational data representing the two additions.}
    \label{fig:sq:tx}
  \end{subfigure}
  \caption{The implementation of squaring using subroutines.}
  \label{fig:sq}
\end{figure}

Finally, we implement squaring.
Again, squaring consists of a loop;
  however, this loop involves calling addition as a subroutine.
This is quite simply done.
We use the data manipulation motifs to bind the numbers we wish to add to a $(+)$ atom;
  then \motif S binds this computational data to a \atom{C} atom, thus `presenting' it to \atom{Add} compuzymes.
The implementation is shown in \Cref{fig:sq}.

As with addition, the reverse of this program is also useful and performs square roots (over a suitably restricted domain---attempting to take the square root of 10 would lead to something resembling a runtime error).

\subsection{Reversible Turing Machines}

The implementation of reversible Turing Machines (defined by Bennett~\cite{bennett1973logical}) is left as an exercise for the reader.
It only requires conditional branching, but the reversible implementation of a bi-infinite tape requires some care.
For a hint, a reference implementation in a related language is given in the author's thesis~\cite{earley2021performance} (p.~147).

\section{Discussion}
\label{sec:concl}

As required, the RBL model admits a variety of forms of component reuse:
  monomers used in the construction of molecules can be fully recycled, being drawn from and returned to a pool of free components;
  dynamic components can be designed to act catalytically, and so may be reused multiple times;
  and the same monomers (e.g.\ those representing Peano numbers) can be reused in distinct contexts without interference.
These properties are strengthened by the modularity conferred by molecular structure:
  any number of instances of the same program or system can run in parallel using the same components and without crosstalk;
  and common sub-systems can be factored out and used by multiple distinct systems, such as a single common fuel species supplying all the free energy in the system.
Moreover, RBL is particularly amenable to implementing hierarchies of abstraction.
For instance, we saw the following hierarchy of computational abstractions:
  the definition of RBL `data-atoms';
  a set of basic structural-manipulation motifs;
  control flow primitives such as branching and looping;
  and subroutines such as addition, which could then be used as a black-box by the squaring routine.
Further abstractions which are possible but not shown include recursion and concurrency.

While RBL is a powerful model, it also has some limitations in its current formulation.
Simulations of the squaring routine encounter 1700 distinct configurations when computing $3^2$ (alternatively, $\sqrt{9}$).
This is a reflection of the hierarchy of abstractions used; the structural manipulation motifs incur a cost of $\sim\mbox{10--15}$ transitions each, and multiple motifs must be employed to perform a given structural manipulation leading to an overhead on the order of 50--100 transitions for each useful computational step.
Simpler `machines' such as the biased walker lie closer to `pure' RBL and incur 8--14 transitions per step (depending on how we count the parallel transitions), though this is still perhaps larger than desired.
Arguably the most significant reason for these overheads originates in the lack of spatial/geometric awareness of RBL.
In biochemical systems, enzymes are able to take advantage of shape complementarity and differently-shaped conformations to perform their manipulations.
This description omits the series of microstate transitions involved in an enzymatic reaction, but even so enzymes are (usually) particularly fast and efficient.
Meanwhile in RBL we must very carefully coordinate structural manipulations through the use of, e.g., control ports: our lack of spatial awareness means that, without such coordination, we could not selectively displace a particular child data-atom from its parent.
This suggests an obvious RBL variant to pursue in the future: one in which geometry plays a first-class role.
Such a model should be able to more directly perform desired structural manipulations.
Moreover, incorporating geometry would ensure that our designs are physically possible: the non-geometric RBL model can easily encode structures that are too crowded for Euclidean space, such as a complete binary tree of depth 20.

Another challenge for the experimental realization of RBL is that the number of configurations of an RBL atoms scales exponentially with the number of ports, and each of these may have a distinct energy.
Restricting the set of allowed energies $E$ to $\{0,\infty\}$ would considerably simplify this, as we may be able to get away with just programming the allowed configurations.
Nevertheless, compuzymes such as \atom{Add} and \atom{Sq} have on the order of 200 allowed configurations.
Furthermore, control ports and compuzyme states involve dozens of possible bond colors.
Future work should investigate whether the number of required bond colors can be reduced, or whether bond colors are even required (i.e.\ whether ports can be monochromatic).
Another question is whether atom configurations can be `factored' into simpler subsystems.
Additionally, it would be useful to know what the minimum required complexity is to implement useful computational abstractions.
For example, while \atom{Add} may require $\sim 200$ configurations, the division into three compuzymes $\atom{Add}_{\mathrm{init.}}$, $\atom{Add}_{\mathrm{loop}}$, and $\atom{Add}_{\mathrm{fin.}}$ may be substantially simpler.
Combining with a geometric variant of RBL should also reduce the number of distinct configurations required.

Lastly, a more accurate treatment of the kinetics of RBL is warranted.
Not only will configuration energy contribute to transition rates, but also any associated entropy changes.
These entropy changes are particularly significant when particle number changes, such as when a compuzyme binds some computational data.
There are also entropy changes associated with interactions with the monomer pools that cannot be eliminated through any `clever' means (see the author's thesis~\cite{earley2021performance} (pp.~101--120)).

There is some similarity between RBL and Thermodynamic Binding Networks (TBNs)~\cite{doty2017thermodynamic}.
TBNs consist of monomers with a geometry-free collection of domains, similar to RBL atoms with a geometry-free collection of ports.
Domains have complementary codomains, and these can form bonds.
However, the TBN model explicitly does not consider the kinetics of system evolution.
Instead it focuses on analyzing the possible stable configurations admitted by a given TBN to determine whether leak reactions are possible, and presumes that the desired state corresponds to thermodynamic equilibrium.
In this way, the goals of TBNs diverge from the goal of RBL.
In RBL, it is the programming of kinetic pathways that is important, and equilibration is to be avoided.
Nevertheless, a realistic implementation of RBL would likely be subject to error conditions such as leak reactions, and so it would be interesting to use a TBN-like framework to evaluate or improve the robustness of RBL systems.

In conclusion, RBL is an interesting new model with powerful properties, but it is also currently too unwieldy for experimental realization.
Future work will further develop RBL, including variant models, to determine the possibility of achieving these powerful properties in a real chemical system.

\bibliography{references}
\appendix
\section{Proofs}
\label{app:prf}

\biasRandomWalk*
\begin{proof}
  Recall the definition of the biased walker (\Cref{sec:walk}) and its transition diagram (\Cref{fig:walk-ex}).
  We will assume typical CRN dynamics, namely that the evolution of the system may be described by a Continuous-Time Markov Chain (CTMC) with transition rates given by the law of mass action.
  For example, the reaction $W_\alpha + G_+ \rightleftharpoons W_\alpha{:}G_+$ has forward transition rate $k_2[W_\alpha][G_+]$ and reverse transition rate $k_1[W_\alpha{:}G_+]$ where $[X]$ is the concentration of species $X$, $k_1$ is the rate constant for unimolecular reactions, and $k_2$ that for bimolecular reactions (all the species have equal energy, and we assume no activation energy).
  Note the periodic symmetry of the walker scheme;
  without loss of generality, we can relabel the transition diagram to go from $W_n$ to $W_{n+1}$ where $n$ corresponds to the $n^{\text{th}}$ position along the track.
  We assume the track extends infinitely in both directions, so that $n\in\mathbb Z$.
  We first prove that the net reaction $W_n \rightleftharpoons W_{n+1}$ has rate constants $\propto [G_+]$ and $\propto [G_-]$ for the forward and reverse reactions.

  The dynamics of the walker are somewhat complicated by the number of intermediate states---13 between each net step.
  However, after some convergence time, any initial distribution will converge to a steady state distribution (up to periodicity of the Markov Chain (MC)); this is analogous to the steady state approximation in chemistry.
  To see this, note that the behavior within each step-window (between $W_n$ and $W_{n+1}$) is identical.
  Therefore, we can overlap each of these `sub-chains', reducing the infinite MC to the finite MC from $W_n$ to $W_{n+1}$ by symmetry.
  As this MC is aperiodic, irreducible, and reversible, it admits a steady state given by detailed balance and this steady state is reached exponentially fast.
  After this `burn-in' time, the initial distribution is effectively forgotten.

  For any pair of adjacent intermediate states $i$ and $j$ (states of the form $W{:}G$), the transition rate for $i\to j$ is $k_1[i]$.
  Consequently, by detailed balance their steady state concentrations are all equal (within a given step-window), and we can write this concentration as $[W{:}G]$.
  The remaining reactions are $W_n + G_+ \rightleftharpoons W_n{:}G_+$ and $W_{n+1}{:}G_- \rightleftharpoons W_{n+1} + G_-$.
  Using the transition rates given earlier, $k_2[W_n][G_+] = k_1[W{:}G]$ and $k_1[W{:}G]=k_2[W_{n+1}][G_-]$, and hence $[W_{n+1}]/[W_n] = [G_+]/[G_-]$.
  Furthermore, the net forward reaction rate constant for $W_n \rightleftharpoons W_{n+1}$ is $k_2[G_+]$, and the reverse reaction rate constant is $k_2[G_-]$.

  Having proved this, we can reduce the original CTMC to the effective CTMC
  \begin{center}\small
  \begin{tikzpicture}[scale=0.7]
    \node[circle,draw,minimum size=1.2cm] (n1) at (0,0) {$W_{n-1}$};
    \node[circle,draw,minimum size=1.2cm] (n2) at (4,0) {$W_n$};
    \node[circle,draw,minimum size=1.2cm] (n3) at (8,0) {$W_{n+1}$};

    \draw [->] (-2,0.7) to [out=0,in=150] node [midway,above] {} (n1);
    \draw [->] (n1) to [out=30,in=150] node [midway,above] {$k_2[G_+]$} (n2);
    \draw [->] (n2) to [out=30,in=150] node [midway,above] {$k_2[G_+]$} (n3);
    \draw [->] (n3) to [out=30,in=180] node [midway,above] {} (10,0.7);

    \draw [->] (10,-0.7) to [out=180,in=330] node [midway,below] {} (n3);
    \draw [->] (n3) to [out=210,in=330] node [midway,below] {$k_2[G_-]$} (n2);
    \draw [->] (n2) to [out=210,in=330] node [midway,below] {$k_2[G_-]$} (n1);
    \draw [->] (n1) to [out=210,in=0] node [midway,below] {} (-2,-0.7);
  \end{tikzpicture}
  \end{center}
  which is prototypical of a random walk with bias $b=([G_+]-[G_-])/([G_+]+[G_-])$.
  The expected value of $n$ at time $t$, given $n(0)=0$, will be $k_2bt$.
  The variance is non-trivial for a continuous-time process, but will be approximately $\sqrt{k_2t[G_+][G_-]}/([G_+]+[G_-])$.
\end{proof}

\section{Data \& Compuzyme Motifs}
\label{app:motif}

Missing from the description of data-atoms and data-ports in \Cref{sec:motif} are the bond colors and atom configurations.

Data-ports consist of two ports.
The first port (counting clockwise) has two bond colors, solid and dashed/$\tss$.
The second port is monochromatic.
This design allows for the controlled and coordinated breaking of bonds.

Control ports have more colors to allow for the intricate signaling required by the motifs.
These colors are:
  $m$, for `monomer';
  solid/neutral, a bond that has no signaling intent;
  dashed/$\tss$, for transitional states;
  $\atom{C}\bullet$, $\atom{C}\tss$, and $\atom{C}\circ$,
    for displacing a data-atom from its \atom C parent;
  and $x\bullet$, $x\tss$, and $x\circ$ for each child data-port $x$, for displacing child data-atoms.
$\bullet$, $\tss$, and $\circ$ indicate that the relevant entity is bound, is transitioning between bound and unbound, or is unbound, respectively.

Wildcard control ports are much simpler, having just two bond colors: solid and dashed/$\tss$.

With the bond colors enumerated, we now define the allowed RBL configuration for data-atoms and compuzymes.
These are illustrated piecemeal in the remainder of this Appendix.

\begingroup
\subsection*{State changes}

Not a motif in itself, compuzyme state changes are common to almost all our motifs.
Compuzymes may perform a long series of manipulations, including branched control flow.
As a result, indistinguishable configurations may be encountered, unexpectedly linking distinct parts of the computation.
To prevent this, we imbue compuzymes with an internal `state' to track progress through configuration-space and correctly handle control flow.
The implementation is below:
\begin{align*}
  \begin{gathered}
    \figSm{comp-stch}
  \end{gathered}
\end{align*}
The top row is an RBL schematic, whereas the bottom row presents an abbreviated notation.
The state is represented by a pair of self-loops.
Each of these have the same set of bond colors, with each color representing a distinct state (e.g.\ $m$, $n$).
Note that $m$ and $n$ are variables rather than explicit colors, i.e.\ $m$ is not the same as the monomer color.
Typically these states will be numbered, but any set of useful labels may be employed.
There is also a special `$\varnothing$' state in which the bonds are broken; compuzymes typically start and end in this state.
Changing state from $m$ to $n$ is simply a matter of changing the bond colors of the loops in turn.

The requirement of a pair of loops may seem superfluous;
however, consider the case of two otherwise identical configurations that are meant to be distinguished by this state.
If a single loop was used, then these configurations would be adjacent and so the state would not serve as a barrier.
The double loop prevents this scenario, as the intermediate $m|n$ state will be constructed so as to be inaccessible.

\endgroup

\begingroup
\renewcommand{\thesubsection}{M}
\subsection{Monomer binding}

Data molecules are formed from data-atoms, and so the manipulation of individual data-atom `monomers' is foundationally important.
When building up data, fresh monomers will need to be drawn; conversely, when breaking down data the extraneous monomers will need to be discarded.
We suppose that the environment provides an unlimited pool of fresh monomers for these purposes.
Both drawing and discarding monomers can be implemented by a single motif, as they are inverses of each other and our system is reversible:
\begin{align*}
  \begin{gathered}
    \figSm{comp-mbind}
  \end{gathered}
\end{align*}
Here the monomer is the pink data-atom, and the compuzyme drawing (resp.\ discarding) the monomer is the green rectangle.
In general we consider compuzymes as a platform for manipulations, hence the rectangular form.
Both the free compuzyme and free monomer have an adjacent configuration in which the control ports form an `$m$-colored' bond.
Consequently, the compuzyme is able to spontaneously form a bond with a free monomer of the correct type, drawing it from solution.
Going forward, we will not explicitly mention adjacent configurations when they can be inferred from the diagrams.
Recall that control ports are unique to a given data-atom type.
As such, a compuzyme can be sure it is drawing the desired monomer from solution.
We conclude the motif by switching the control port to the `neutral' bond color, ready for subsequent transitions.

Notice that the monomer in solution has a self-loop, also $m$-colored.
Suppose that it didn't and instead had 0 bonds.
In this case, the neutral bond in the final configuration could readily break.
To address this, we assign an infinite energy to data-atoms with 0 bonds, and use the self-loop to mark monomers.
Observe further that the compuzyme has no direct way to break the self-loop on the monomer.
This is the reason for the existence of control ports: through different bond colors, we can signal different intents.
Upon receiving such a signal, the normally-inert data-atom is able to transition to other configurations.
Specifically, there are usually $E=\infty$ energy barriers preventing a data-atom from transitioning to other configurations; but for specific control port colors, these energy barriers are lowered and a small region of configuration space is made available to explore.
Note also the state change, the purpose of which we will elaborate further in \motif C.

\endgroup

\begingroup
\renewcommand{\thesubsection}{X}
\subsection{Monomer exchange}

Related to the monomer binding motif is the monomer exchange motif.
Arguably this is not a single motif in its own right, as it can be realized by combining two monomer-binding motifs back-to-back.
However, monomer exchange is sufficiently common and useful that we promote it to its own motif.
The implementation follows:
\begin{align*}
  \begin{gathered}
    \hspace{-1em}\figSm{comp-mono-xchg}
  \end{gathered}
\end{align*}
Notice that its implementation differs from two occurrences of \motif M:
  the state changes have been replaced by coupling to fuel.
Though in principle any motif or state change can be coupled to the burning of fuel, we choose this motif as the canonical such point for convenience.

\endgroup

\begingroup
\renewcommand{\thesubsection}{C}
\subsection{Computational data binding}

To bind computational data in order to manipulate it we need to (1) recognize it as such via its control port, and (2) displace it from its \atom{C} parent:
\begin{align*}
  \begin{gathered}
    \hspace{-1em}\figSm{comp-cbind}
  \end{gathered}
\end{align*}
The \atom{C} atom is in beige, and the computational atom in pink.
Recall that the control port has three colors indicating progress of binding and displacement:
  $\atom{C}\bullet$ corresponds to computational data with a \atom{C} parent while $\atom{C}\circ$ corresponds to computational data displaced from its parent.
  $\atom{C}\tss$ is a `transitional' state between these.
We bind with $\atom{C}\bullet$, and then immediately change state from $n$ to $n'$.
These state changes are frequent in motifs and act as \emph{bidirectional assertions}.
Without this, we could---for example---break the control port-bond during a transitional state.
After this assertion, we switch to the transitional bond color $\atom{C}\tss$.
In this configuration, the bond between computational atom and \atom{C} parent can be weakened in a number of steps until it's broken completely.
At the same time, we must form a bond with the soon-to-be-displaced \atom{C} atom so as not to lose the association between them.
This association is important, because this computation may be a subcomputation, with the particular \atom{C} atom bound by another compuzyme.
If we were to lose the association, then we could not return the result of this computation to the caller.
Indeed, the series of adjacent configurations is programmed carefully so that the bond between \atom{C} and the computational atom cannot be completely broken until the compuzyme has bound both.
Finally we have another bidirectional assertion from state $n'$ to $n''$, after which further transitions and motifs may occur.
Notice that two of the last bond changes can happen in either order, hence the parallelism in the transition diagram.

\endgroup

\begingroup
\renewcommand{\thesubsection}{D}
\subsection{Destructuring}

Data molecules are tree structures formed from data-atoms.
To effectively manipulate these, we need to be able to break down (and build up) these structures into their constituent parts.
We achieve this with the destructuring motif, which can be used to displace (resp.\ replace) a child data-atom from its parent.
This motif is nearly identical to that of binding computational data, primarily differing in the control port bond colors used:
\begin{align*}
  \begin{gathered}
    \hspace{-1em}\figSm{comp-destruct}
  \end{gathered}
\end{align*}
The $?$ bond `color' indicates that this motif doesn't care what the initial state of the control port is.
Recall that a data-atom may have multiple children, each bound by a differently-labelled data-port.
Here we are displacing a child data-atom (blue) from the $x$ data-port of its parent (pink).
Specifically, the child data-atom is of type \atom{A}, and indeed this particular instance of \motif D can \emph{only} displace an atom of type \atom{A}.
Note that the child data-atom may have its own children.

If we wish to introduce a branch (resp.\ merge) in control flow, then multiple of these motifs can be overlaid---one for each type of data-atom to be recognized.
The first 5 configurations shown are common, as they are independent of the child data-atom type.
The remaining configurations diverge.
The nature of configuration adjacency ensures that this works as expected.
We may also want to displace a data-atom regardless of its type.
For this, we substitute the interactions with the control port for the wildcard control port.

\endgroup

\begingroup
\renewcommand{\thesubsection}{S}
\subsection{Subcomputation initiation}

Because of the indirection provided by the \atom{C} atom, compuzymes act independently of the context of a computation.
This means that any routines we implement can also be used as subroutines.
To call a subroutine, we prepare a `subcomputation':
  computational data that is bound to a host compuzyme and presented on a \atom{C} atom.
Other relevant compuzymes can then act on this subcomputation.
Upon completion, the host compuzyme can accept the result and use it as required.

Prior to using this motif, the appropriate data representing the subcomputation should be prepared by means of the other motifs.
Then, the subcomputation initiation motif will bind it to a fresh copy of the \atom{C} atom:
\begin{align*}
  \begin{gathered}
    \hspace{-1em}\figSm{comp-sub-init}
  \end{gathered}
\end{align*}
This motif is specialized to the type of the data-atom.
To complete the subroutine, we use another instance of this motif in reverse and specialized to the result data-atom.
For example, if the initial data is tagged with $(+)$, then the final data would be tagged with $(+')$.

Now the purpose of the \atom{C} atom becomes clear.
  We need to release the computational atom's control port so other compuzymes can interact with it, but then we would lose track of it.
  Therefore, we need to maintain a separate bond to this molecule.
  However, we cannot simply bind the data-atom on another port: it is important that this data-atom can be swapped out, for example $(+)$ for $(+')$, so as to indicate the completion of computation.
  As such, we use the intermediate atom \atom{C} to achieve the necessary indirection to satisfy all of these conditions.
In particular, we bind the \atom{C} atom via its `root' port; its $c$ port can then be bound by other compuzymes as needed.

\end{document}